\documentclass{article} %
\usepackage{iclr2026_conference,times}

\usepackage{amsmath,amsfonts,bm}

\def\eqref#1{equation~\ref{#1}}

\def\1{\bm{1}}

\DeclareMathAlphabet{\mathsfit}{\encodingdefault}{\sfdefault}{m}{sl}
\SetMathAlphabet{\mathsfit}{bold}{\encodingdefault}{\sfdefault}{bx}{n}

\DeclareMathOperator{\sign}{sign}

\setlength {\marginparwidth }{2cm}
\usepackage{xcolor}         %
\usepackage{xspace}         %
\usepackage[utf8]{inputenc} %
\usepackage[T1]{fontenc}    %
\usepackage{hyperref}       %
\usepackage{url}            %
\usepackage{booktabs}       %
\usepackage{amsfonts}       %
\usepackage{nicefrac}       %
\usepackage{microtype}      %
\usepackage{xcolor}
\usepackage{amsmath}
\usepackage{multirow}
\usepackage{array}
\usepackage{siunitx}
\usepackage{booktabs}
\usepackage{comment}
\usepackage{tikz}
\usepackage{listings}
\usepackage{wrapfig}
\usepackage{cleveref}
\usepackage{svg}

\usepackage{packages/default_packages}
\usepackage{packages/colors}
\usepackage{packages/colab}
\usepackage{packages/mjb}
\usepackage{packages/local_defs}

\newcommand\extrafootertext[1]{%
    \bgroup
    \renewcommand\thefootnote{\fnsymbol{footnote}}%
    \renewcommand\thempfootnote{\fnsymbol{mpfootnote}}%
    \footnotetext[0]{#1}%
    \egroup
}

\newcommand{\datasetname}{LIMIT}

\graphicspath{{figures/}}

\title{Theoretical Limits of Embeddings in the Age of Reasoning and Instruction-Following Retrieval}

\iclrfinalcopy
\author{
    \textbf{Orion Weller\textsuperscript{1,2}}
    \quad
    \textbf{Michael Boratko\textsuperscript{1}}
    \quad
    \textbf{Iftekhar Naim\textsuperscript{1}}
    \quad
    \textbf{Jinhyuk Lee\textsuperscript{1}}  \\ \\
    \textsuperscript{1}Google DeepMind, \textsuperscript{2}Johns Hopkins University \\ \\
    \texttt{oweller@cs.jhu.edu,jinhyuklee@google.com}
}

\begin{document}

\newcommand{\fix}{\marginpar{FIX}}
\newcommand{\new}{\marginpar{NEW}}

\title{On the Theoretical Limitations of \\Embedding-Based Retrieval}

\maketitle

\begin{abstract}
Vector embeddings have been tasked with an ever-increasing set of retrieval tasks over the years, with a nascent rise in using them for reasoning, instruction-following, coding, and more. These new benchmarks push embeddings to work for \emph{any query} and \emph{any notion of relevance} that could be given. While prior works have pointed out theoretical limitations of vector embeddings, there is a common assumption that these difficulties are exclusively due to unrealistic queries, and those that are not can be overcome with better training data and larger models.
In this work, we demonstrate that we may encounter these theoretical limitations in realistic settings with extremely simple queries.
We connect known results in learning theory, showing that the number of top-$k$ subsets of documents capable of being returned as the result of some query is limited by the dimension of the embedding.
We empirically show that this holds true even if we directly optimize on the test set with free parameterized embeddings.
We then create a realistic dataset called \datasetname{} that stress tests embedding models based on these theoretical results, and observe that even state-of-the-art models fail on this dataset despite the simple nature of the task.
Our work shows the limits of embedding models under the existing single vector paradigm and calls for future research to develop new techniques that can resolve this fundamental limitation.
\end{abstract}

\newcommand{\draftonly}[1]{#1}
\newcommand{\eat}[1]{}
\renewcommand{\draftonly}[1]{}
\definecolor{darkgreen}{RGB}{0, 102, 0}

\newcommand{\draftcomment}[3]{\draftonly{\textcolor{#2}{{{[#1: #3]}}}}}

\newcommand{\orion}[1]{\textcolor{brown}{[OW: #1]}}
\newcommand{\jinhyuk}[1]{\textcolor{blue}{[JL: #1]}}
\newcommand{\mjb}[1]{\textcolor{violet}{[MB: #1]}}
\newcommand{\inaim}[1]{\textcolor{teal}{[IN: #1]}}

\crefformat{section}{\S#2#1#3} %
\crefformat{subsection}{\S#2#1#3}
\crefformat{subsubsection}{\S#2#1#3}

\section{Introduction}

Over the last two decades, information retrieval (IR) has moved from models dominated by sparse techniques (such as BM25 \citet{robertson1995okapi}) to those that use neural language models (LM) as their backbones \citep{lee-etal-2019-latent,craswell2020overview,izacard2021unsupervised,wang2022text}.
These neural models are predominantly used in a single vector capacity, where they output a single \textit{embedding} representing the entire input (also known as \textit{dense retrieval}).
These embedding models are capable of generalizing to new retrieval datasets and have been tasked with solving increasingly complicated retrieval problems \citep{thakur2021beir,enevoldsen2025mmteb,lee2025gemini}.

In recent years this has been pushed even further with the rise of instruction-following retrieval benchmarks, where models are asked to represent \textbf{any relevance definition} for \textbf{any query} \citep{weller2025mfollowir,weller2025rank1,song2025ifir,xiao2024rar,su2024bright}.
For example, the QUEST dataset \citep{malaviya2023quest} uses logical operators to combine different concepts, studying the difficulty of retrieval for complex queries (e.g., "Moths or Insects or Arthropods of Guadeloupe"). On the other hand, datasets such as BRIGHT \citep{su2024bright} explore the challenges arising from different definitions of relevance by defining relevance in ways that require reasoning. One subtask includes reasoning over a given Leetcode problem (the query) to find other Leetcode problems that share a subtask (e.g. others problems using dynamic programming).
Although models cannot solve these benchmarks yet, the community has proposed these problems in order to push the boundaries of what dense retrievers are capable of---which is now implicitly \textit{every task} that could be defined.

Rather than proposing empirical benchmarks to gauge what embedding models can achieve, we seek to understand at a more fundamental level what the limitations are.
Since embedding models use vector representations in geometric space, there exist well-studied fields of mathematical research \citep{papadimitriou1982communication} that could be used to analyze these representations.

Our work aims to bridge this gap, connecting known theoretical results in linear algebra with modern advancements in neural information retrieval.
We draw upon research in high-dimensional geometry to provide a lower bound on the embedding dimension needed to represent a given combination of relevant documents and queries.
Specifically, we show that for a given embedding dimension $d$ \textbf{there exist top-$k$ combinations of documents that cannot be returned}---no matter the query---highlighting a theoretical and fundamental limit to embedding models.

 \begin{figure*}[t!]
    \vspace{-1em}
    \includegraphics[width=0.99\linewidth,trim=0cm 0.5cm 0cm 0cm]{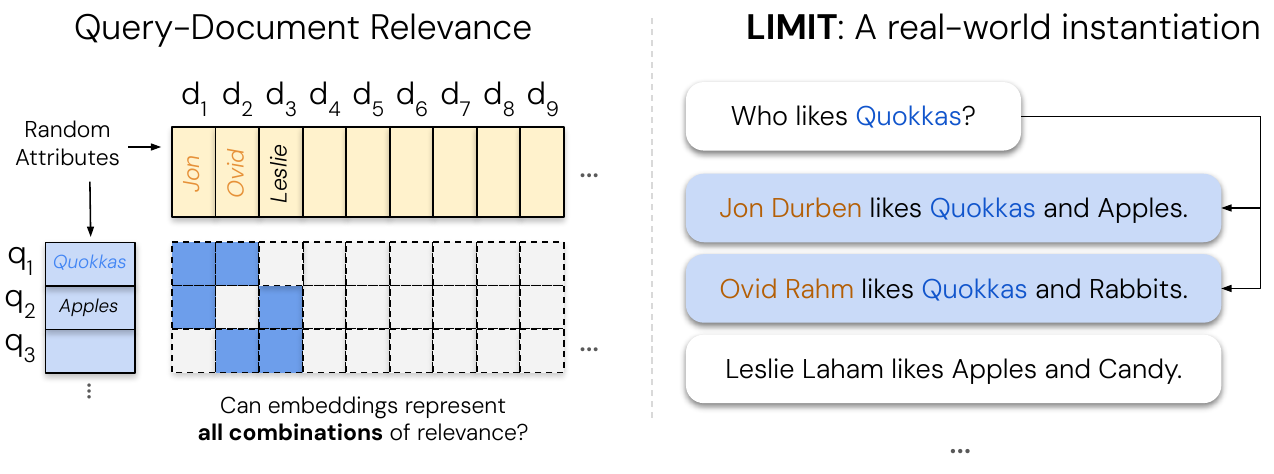}
    \captionsetup{justification=raggedright, singlelinecheck=false}
    \vspace{0.2em}
    \caption{A depiction of the \datasetname{} dataset creation process, based on theoretical limitations. We test \textbf{all combinations} of relevance for $N$ documents (i.e. in the figure, all combinations of relevance for three documents with two relevant documents per query) and instantiate it using a simple mapping. 
    }
    \label{fig:teaser}
    \vspace{-2em}
\end{figure*}

To show that this theoretical limit is true for \text{any} retrieval model or training dataset, we test a setting where the vectors themselves are directly optimized with the test data. This allows us to empirically show how the embedding dimension enables the solving of retrieval tasks. We find that there exists a crucial point for each embedding dimension ($d$) where the number of documents is too large for the embedding dimension to encode all combinations. We then gather these crucial points for a variety of $d$ and show that this relationship can be modeled empirically with a polynomial function.

We also go one step further and construct a realistic but simple dataset based on these theoretical limitations (called \datasetname{}).\footnote{Data and code are available at \url{https://github.com/google-deepmind/limit}
} Despite the simplicity of the task (e.g., \texttt{who likes Apples?} and \texttt{Jon likes Apples, ...}), we find it is very difficult for even state-of-the-art embedding models~\citep{lee2025gemini,qwen3embedding} on MTEB~\citep{enevoldsen2025mmteb}, and practically impossible for models with small embedding dimensions using standard optimization techniques.  

Overall, our work contributes: (1) a theoretical basis for the fundamental limitations of embedding models, (2) a best-case empirical analysis showing that this proof holds for any dataset instantiation (by free embedding optimization), and (3) a simple real-world natural language instantiation called \datasetname\ that even state-of-the-art embedding models cannot solve.

These results imply interesting findings for the community: on one hand we see neural embedding models becoming immensely successful. However, academic benchmarks test only a small amount of the queries that could be issued (and these queries are often overfitted to), hiding these limitations.
Our work shows that as the tasks given to embedding models require returning ever-increasing combinations of top-$k$ relevant documents (e.g., through instructions connecting previously unrelated documents with logical operators), we will reach a limit of combinations they can represent. 

Thus, the community should be aware of these limitations, both when creating evals and also by using alternate architectures---such as cross-encoders / multi-vector / more expressive similarity functions ---when trying to handle the full range of instruction queries, i.e. \textit{any query and relevance definition}.

\section{Related Work}
\subsection{Neural Embedding Models}
There has been immense progress on embedding models in recent years \citep{lee-etal-2019-latent,craswell2020overview,behnamghader2024llm2vec}, moving from simple web search (text-only) to advanced instruction-following and multi-modal representations.  These models generally followed advancements in language models, such as pre-trained LMs \citep{hoffmann2022training}, multi-modal LMs \citep{li2024multimodal,team2024chameleon}, and advancements in instruction-following \citep{zhou2023instruction,ouyang2022training}. Some of the prominent examples in retrieval include CoPali \citep{faysse2024colpali} and DSE \citep{ma2024unifying} which focus on multimodal embeddings, Instructor \citep{su2022one} and FollowIR \citep{weller2024followir} for instruction following, and GritLM \citep{muennighoff2024generative} and Gemini Embeddings \citep{lee2025gemini} for pre-trained LMs turned embedders. 

Our work, though focused solely on textual representations for simplicity, \textbf{applies to all modalities of single vector embeddings for any domain of dataset}. As the space of things to represent grows (through instructions or multi-modality) they will increasingly run into these theoretical limitations.

\subsection{Empirical tasks pushing the limits of dense retrieval}
Retrieval models have been pushed beyond their initial use cases to handle a broad variety of areas. Notable works include efforts to represent a wide group of domains \citep{thakur2021beir,lee2024gecko}, a diverse set of instructions \citep{weller2024followir,Zhou2024BeyondCR,oh2024instructir,weller2025seq}, and to handle reasoning over the queries \citep{xiao2024rar,su2024bright}. This has pushed the focus of embedding models from basic keyword matching to embeddings that can represent the full semantic meaning of language. As such, it is more common than ever to connect what were previously unrelated documents into the top-$k$ relevant set,\footnote{You can imagine an easy way to connect any two documents merely by using logical operators, i.e. X and Y.} increasing the number of combinations that models must be able to represent. This has motivated our interest in understanding the limits of what embeddings can represent, as current work expects it to handle \textit{every} task. 

Previous work has explored empirically the limits of models: \citet{reimers2020curse} showed that smaller dimension embedding models have more false positives, especially with larger-scale corpora. \citet{ormazabal2019analyzing} showed the empirical limitations of models in the cross-lingual setting and \citet{yin2018dimensionality} showed how embedding dimensions relate to the bias-variance tradeoff. In contrast, our work provides a theoretical connection between the embedding dimension and the top-k sets it can retrieve, while also showing empirical limitations.

\subsection{Theoretical Limits of Vectors in Geometric Space}
\label{sec:related_math}
Understanding and finding nearest neighbors in semantic space has a long history in mathematics research, with early work such as the Voronoi diagram being studied as far back as 1644 and formalized in 1908 \citep{voronoi1908nouvelles}. The order-$k$ version of the Voronoi diagram (i.e. the Voronoi diagram partitioning the space into regions based on their closest $k$ points) is obviously connected to information retrieval and has been studied for many years \citep{clarkson1988applications}. The number of such regions is equal to the number of unique retrieval sets of size $k$, however this quantity is notoriously difficult to bound tightly
\citep{BOHLER2015539,lee1982k,chen2023efficient}. 

We approach this problem from a different angle, asking not how many $k$-subsets a given configuration realizes, but rather what embedding dimension is \emph{necessary} to realize all $k$-subsets with a guaranteed score margin. By applying a classical sphere-packing volume argument \citep{vershynin2018high,conway1999recent}, we obtain a lower bound on the embedding dimension in terms of $n$, $k$, and the margin~$\gamma$. Our result is conceptually related to the Johnson--Lindenstrauss lemma \citep{johnson1984extensions}, which gives a \emph{sufficient} dimension to preserve pairwise distances among $n$ points; in contrast, our bound gives a \emph{necessary} dimension to realize all retrieval sets with a margin. The role of the margin in controlling the complexity of realizable configurations also parallels classical results in statistical learning theory, including the fat-shattering dimension \citep{kearns1994efficient} and margin-based generalization bounds for linear classifiers \citep{bartlett1998sample,vapnik1998statistical}, where larger margins similarly constrain the capacity of the hypothesis class.

\section{Representational Capacity of Vector Embeddings}
In this section we formally define the minimum embedding dimension required to satisfy a given retrieval objective, and draw on classical sphere-packing results from high-dimensional geometry to establish a lower bound. We note that this will be an extreme lower bound, as practical models have to deal with other constraints such as learning through gradient descent and using LM tokenization.

\paragraph{Setup.}
Let $v_1,\dots,v_n\in\mathbb{R}^d$ be unit\footnote{For simplicity, as nearly all SoTA retrieval models use unit vectors.} document vectors, and let queries be unit vectors $u\in\mathbb{R}^d$. Fix $\gamma>0$. A $k$-subset $S\subseteq[n]$ is realized with margin $\gamma$ if there exists a unit query $u_S$ such that

\begin{equation}
\min_{i\in S}\langle u_S,v_i\rangle\ \ge\ \max_{j\notin S}\langle u_S,v_j\rangle\ +\ 2\gamma.
\label{eq:margin}
\end{equation}

Since $\langle u,v_i\rangle\in[-1,1]$ for unit vectors, any score gap is at most $2$, hence \eqref{eq:margin} is feasible only for $0<\gamma\le 1$. Throughout, $\log$ denotes the natural logarithm.

\begin{theorem}[Dimension lower bound]
\label{thm:margin-lb}
Assume $1\le k<n$ and that \emph{every} $k$-subset $S\subseteq[n]$ is realized with margin $\gamma$ as in \eqref{eq:margin}. Then

\begin{equation}
\binom{n}{k}\ \le\ \Bigl(1+\frac{1}{\gamma}\Bigr)^{d},
\qquad\text{hence}\qquad
d\ \ge\ \frac{\log\binom{n}{k}}{\log\!\bigl(1+1/\gamma\bigr)}.
\label{eq:bound}
\end{equation}
\end{theorem}

\begin{proof}
Fix two distinct $k$-subsets $S\neq T$ and choose $i\in S\setminus T$ and $j\in T\setminus S$. Applying \eqref{eq:margin} to $S$ and to $T$ gives

\[
\langle u_S, v_i-v_j\rangle\ge 2\gamma,
\qquad
\langle u_T, v_j-v_i\rangle\ge 2\gamma.
\]

Adding yields $\langle u_S-u_T, v_i-v_j\rangle\ge 4\gamma$. 
By Cauchy--Schwarz and (universally, for \emph{any} unit vectors) $\|v_i-v_j\|\le \|v_i\|+\|v_j\|=2$, we obtain
$\|u_S-u_T\|\ge 2\gamma$.
Thus the $M=\binom{n}{k}$ unit queries $\{u_S\}$ are pairwise $2\gamma$-separated, so the open balls $B(u_S,\gamma)$ are disjoint.
Moreover, since $\|u_S\|=1$, each $B(u_S,\gamma)\subseteq B(0,1+\gamma)$, and therefore
\[
M\cdot \mathrm{vol}\!\bigl(B_d(\gamma)\bigr)\ \le\ \mathrm{vol}\!\bigl(B_d(1+\gamma)\bigr).
\]
Using $\mathrm{vol}(B_d(r))=C_d\,r^d$ for a constant $C_d$ depending only on $d$ (which cancels), we get $M\gamma^d\le (1+\gamma)^d$, i.e. $\binom{n}{k}\le \bigl((1+\gamma)/\gamma\bigr)^d=\bigl(1+1/\gamma\bigr)^d$.
Rearranging yields \eqref{eq:bound}.
\end{proof}

\begin{table}[h!]
\centering
\footnotesize
\vspace{-1em}
\begin{tabular}{l|cccc}
\toprule
Corpus size $n$ & $k=2$ & $k=10$ & $k=100$ & $k=1000$ \\
\midrule
$10^2$  & 4  & 13  & \textit{trivial} & ---\\
$10^3$  & 6  & 23  & 135 & \textit{trivial}\\
$10^4$  & 8  & 33  & 233 & 1354\\
$10^5$  & 10 & 42  & 329 & 2334\\
$10^6$  & 12 & 52  & 425 & 3296\\
$10^7$  & 14 & 61  & 521 & 4257\\
$10^8$  & 16 & 71  & 617 & 5217\\
$10^9$  & 17 & 81  & 713 & 6177\\
$10^{10}$ & 19 & 90  & 809 & 7137\\
$10^{11}$ & 21 & 100 & 905 & 8098\\
\bottomrule
\end{tabular}
\captionsetup{justification=raggedright, singlelinecheck=false}
\caption{Lower bounds for embedding dimension from Theorem~\ref{thm:margin-lb} for $\gamma=0.1$. When $n$ and $k$ are both 1000, the result is \textit{trivial} because $\binom{1000}{1000}=1$ (there is only one $k$-subset, hence no ``irrelevant'' items to separate). We see for large $k$ and $n$ values these dimension requirements are already greater than those currently used for web-scale search. \textbf{If these numbers are inflated by a small multiple due to constraints on gradient learning or other LM-based constraints} (e.g. tokenization, generalization) \textbf{these bounds are outside of any reasonable embedding dimension.}}
\label{tab:bounds}
\end{table}

\subsubsection{Implications}

\paragraph{Numerical instantiation}
We can illustrate the effects of this lower bound using $\gamma=0.1$ (score gap $2\gamma=0.2$), which is approximately standard for models based on empirical usage. Thus, \eqref{eq:bound} becomes
$d\ge \left\lceil \log\binom{n}{k}/\log 11\right\rceil$, with a table for various $k$ and $n$ values in Table~\ref{tab:bounds}.

For $n\gg k$, $\log\binom{n}{k}\approx k\log(en/k)$, so \eqref{eq:bound} forces
\[
d\ =\ \Omega\!\left(\frac{k\log(en/k)}{\log(1+1/\gamma)}\right).
\]
A stricter margin requirement (larger $\gamma$) demands higher dimension, since $\log(1+1/\gamma)$ decreases with $\gamma$ (feasibility requires $\gamma\le 1$, so the denominator is at least $\log 2$). 

\paragraph{Consequences}
Due to space and speed requirements, most embeddings used for web-scale search are quantized or truncated (e.g. through Matryoshka embeddings \citep{kusupati2022matryoshka}) to less than 1k dimensions, while the largest embeddings used in research are around 4096 \citep{qwen3embedding}. We see that even with a moderate margin, which is needed to handle noise from messy data or quantization, the lower bounds in Table~\ref{tab:bounds} can already be larger than what is used in practice.

Additional constraints on real-world models (such as needing to generalize, learn from gradient descent, and use natural language and tokenization) will make the dimension required in practice much higher. As Table~\ref{tab:bounds} shows, even a small multiple of this lower bound would make the embedding dimension requirement infeasible. This multiple seems well-founded, as we will show in the next section from the best-case optimization setting (e.g. free embeddings).

\section{Empirical Connection: Best Case Optimization}
\label{sec:free_embeds}
Having established a theoretical limitation of embedding models based on their embedding dimension $d$, we seek to show that this holds empirically also.

To show the strongest optimization case possible, we design experiments where the vectors themselves are directly optimizable with gradient descent.\footnote{This could also be viewed as an embedding model where each query/doc are a separate vector via lookup.} We call this "free embedding" optimization, as the embeddings are free to be optimized and not constrained by natural language, which imposes constraints on any realistic embedding model. Thus, this shows whether it is feasible for \textbf{any embedding model} to solve this problem: if the free embedding optimization cannot solve the problem, real retrieval models will not be able to either. It is also worth noting that we do this by directly optimizing the embeddings over the target qrel matrix (test set). This will not generalize to a new dataset, but is done to show the highest performance that could possibly occur.

\paragraph{Experimental Settings}
We create a random document matrix (size $n$) and a random query matrix with top-$k$ sets (of all combinations, i.e. size $m=\binom{n}{k}$), both with unit vectors. We then directly optimize for solving the constraints with the Adam optimizer \citep{kingma2014adam}.\footnote{We found similar results with SGD, but we use Adam for speed and similarity with existing training methods.}
Each gradient update is a full pass through all correct triples (i.e. full dataset batch-size) with the InfoNCE loss function \citep{oord2018representation},\footnote{In preliminary experiments, we found that InfoNCE performed best, beating MSE and Margin. As we are directly optimizing the vectors with full-dataset batches, this is $\mathcal{L}_{\text{total}} = -\frac{1}{M} \sum_{i=1}^{M} \log \frac{\sum_{d_r \in R_i} \exp(\text{sim}(q_i, d_r) / \tau)}{\sum_{d_k \in D} \exp(\text{sim}(q_i, d_k) / \tau)}$ where $D$ is all docs, $d_r$ is the relevant documents for query $q_i$ and $d_k$ are the non-relevant documents. For experiments with sigmoid learning functions (e.g. \citet{bangachev2025global}, see Appendix~\ref{app:sigmoid})} with all other documents as in-batch negatives (i.e. full dataset in batch). As nearly all embedding models use normalized vectors, we do also (via projected gradient descent).
We perform early stopping when there is no improvement in the loss for 1000 iterations.
We gradually increase the number of documents (and thus the binomial amount of queries) until the optimization is no longer able to solve the problem (i.e. achieve 100\% accuracy). We call this the \textit{critical-n} point.  

We focus on relatively small sizes for $n$, $k$,  and $d$ due to the combinatorial explosion of combinations with larger document values (i.e. 50k docs with top-$k$ of 100 gives 7.7e+311 combinations, which would be equivalent to the number of query vectors of dimension $d$ in that free embedding experiment). We use $k=2$ and increase $n$ by one for each $d$ value until it breaks. We fit a polynomial regression line to the data so we can model and extrapolate results outwards.

\begin{wrapfigure}[15]{r}{0.5\textwidth}
    \vspace{-1.5em}
    \includegraphics[width=0.46\textwidth]{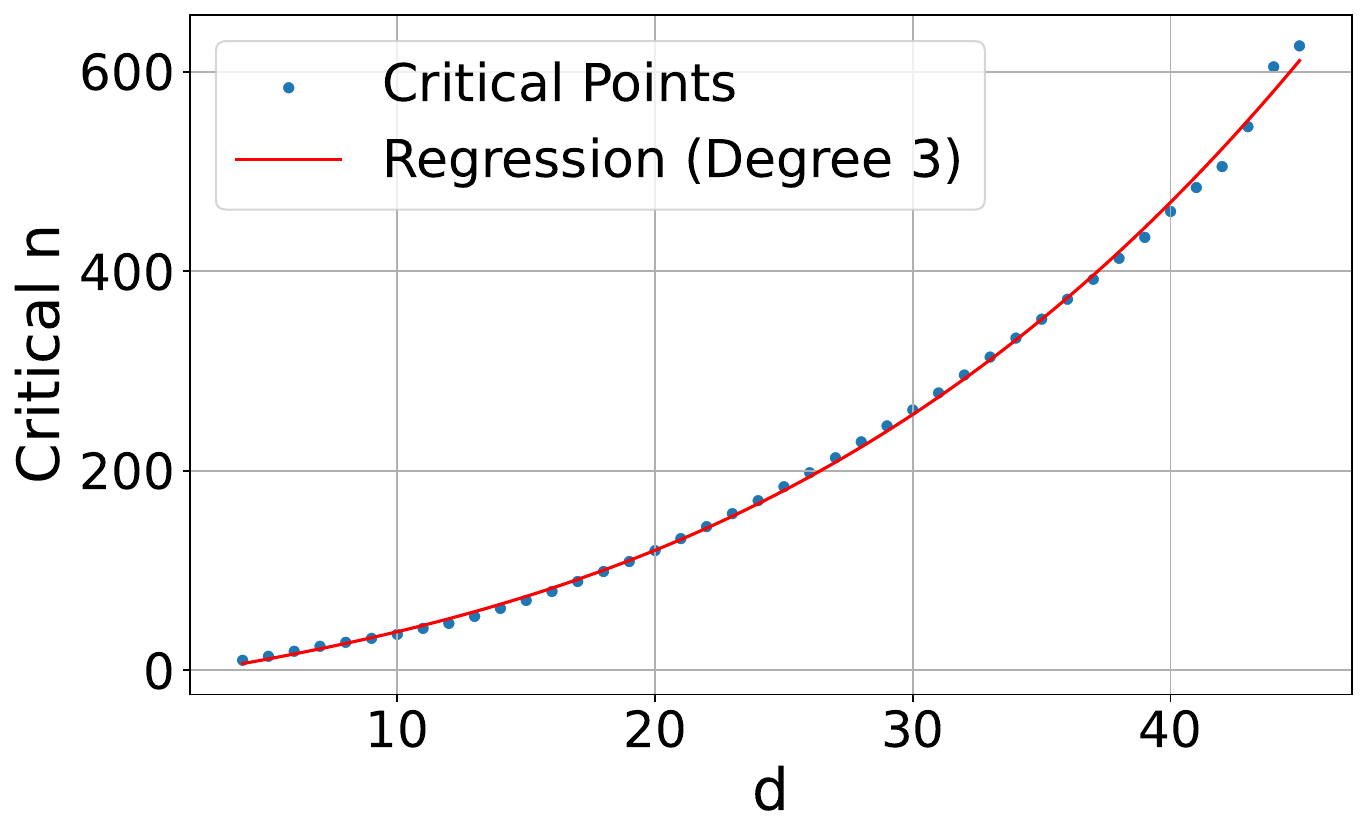}
    \captionsetup{justification=raggedright, singlelinecheck=false}
    \vspace{-0.5em}
    \caption{The critical-n value where the dimensionality is too small to successfully represent all the top-2 combinations. We plot the trend line as a polynomial function.
    }
    \label{fig:critical_n}
\end{wrapfigure}

\paragraph{Results} Figure~\ref{fig:critical_n} shows that the curve fits a 3rd degree polynomial curve, with formula $y=-10.5322 + 4.0309d + 0.0520d^2 + 0.0037d^3$ ($r^2$=0.999). Extrapolating this curve outward gives the critical-n values (for embedding size): 500k (512), 1.7m (768), 4m (1024), 107m (3072), 250m (4096). We note that this is the best case: a real embedding model cannot directly optimize the query and document vectors to match the test qrel matrix (and is constrained by factors such as "modeling natural language"). 
The results also show that the lower bounds in the previous section are a gross underestimate of real-world performance, as Table~\ref{tab:bounds} shows a lower bound of 4 for $n=100$ whereas we see the free embeddings needing $d>18$ (e.g. a 4.5 multiplier even in the no-generalization or natural language case).
Overall, these numbers already show that for web-scale search, even the largest embedding dimensions with ideal test-set optimization are not enough to model all combinations.

\section{Empirical Connection: Real-World Datasets}
\label{sec:real}
The free embedding experiments provide empirical evidence that our theoretical results hold true. However, they still are abstract - what does this mean for real embedding models? In this section we (1) draw connections from this theory to existing datasets and (2) create a trivially simple yet extremely difficult retrieval task for existing SOTA models. 

\subsection{Connection to Existing Datasets}
Existing retrieval datasets typically use a static evaluation set with limited numbers of queries, as relevance annotation is expensive to do for each query. This means practically that the space of queries used for evaluation is a very small sample of the number of potential queries. For example, the QUEST dataset \citep{malaviya2023quest} has 325k documents and queries with 20 relevant documents per query, with a total of 3357 queries. The number of unique top-20 document sets that could be returned with the QUEST corpus would be $\binom{325k}{20}$ which is equal to 7.1e+91 (larger than the estimate of atoms in the observable universe, $10^{82}$). Thus, the 3k queries in QUEST can only cover an infinitesimally small part of the qrel combination space.  

Although it is not possible to instantiate all combinations when using large-scale corpora, search evaluation datasets are a proxy for what any user would ask for and ideally would be designed to test many combinations, as users will do. In many cases, developers of new evaluations simply choose to use fewer queries due to cost or computational expense of evaluation. For example, QUEST's query "Novels from 1849 or George Sand novels" combines two categories of novels with the "OR" operator -- one could instantiate new queries to relate concepts through OR'ing other categories together. Similarly, with the rise of search agents, we see greater usage of hyper-specific queries: BrowseComp \citep{wei2025browsecomp} has 5+ conditions per query, including range operators. With these tools, it is possible to sub-select any top-$k$ relevant set with the right operators if the documents are sufficiently expressive (i.e. non-trivial). Thus, that existing datasets choose to only instantiate some of these combinations is mainly for practical reasons and not because of a lack of existence. 

In contrast to these previous works, we seek to build a dataset that evaluates all combinations of top-$k$ sets for a small number of documents. Rather than using difficult query operators like QUEST, BrowseComp, etc. (which are already difficult for reasons outside of the qrel matrix) we choose very simple queries and documents to highlight the difficulty of representing all top-$k$ sets themselves.

\subsection{The \datasetname{} Dataset}

\paragraph{Dataset Construction} In order to have a natural language version of this dataset, we need some way to map combinations of documents into something that could be retrieved with a query. One simple\footnote{This is just one way, designed to be realistic and simple. However, our framework allows for any way of instantiation -- not stuck to this arbitrary natural language design.} way to do this is to create a synthetic version with latent variables for queries and documents and then instantiate it with natural language. For this mapping, we choose to use attributes that someone could like (i.e. Jon likes Hawaiian pizza, sports cars, etc. ) as they are plentiful and don't present issues w.r.t. other items: one can like Hawaiian pizza but dislike pepperoni, all preferences are valid. We then enforce two constraints for realism: (1) users shouldn't have too many attributes, thus keeping the documents short (less than 50 per user) and (2) each query should only ask for one item to keep the task simple (i.e. "who likes X"). We gather a list of attributes a person could like through prompting Gemini 2.5 Pro. We then clean it to a final 1850 items by iteratively asking it to remove duplicates/hypernyms, while also checking the top failures with BM25 to ensure no overlap.

We choose to use 50k documents in order to have a hard but relatively small corpus and 1000 queries to maintain statistical significance while still being fast to evaluate. For each query, we choose to use two relevant documents (i.e. $k$=2), both for simplicity in instantiating and to mirror previous work (i.e. NQ, HotpotQA, etc. \citep{kwiatkowski2019natural,yang2018hotpotqa}).

 \begin{figure*}[t!]
    \includegraphics[width=0.99\linewidth,trim=0.75cm 0cm 0.75cm 0cm]{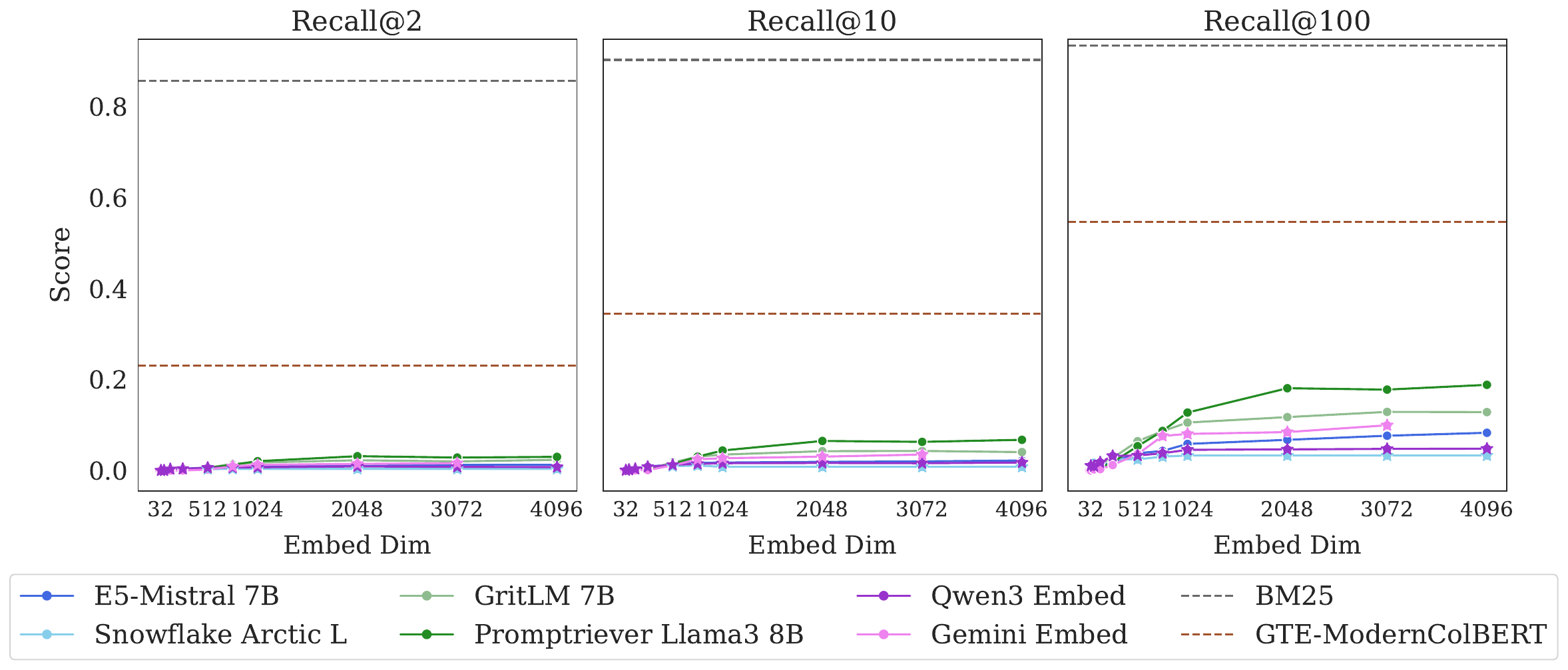}
    \captionsetup{justification=raggedright, singlelinecheck=false}
    \caption{Scores on the \datasetname{} task. Despite the simplicity of the task we see that SOTA models struggle. We also see that the dimensionality of the model is a limiting factor and that as the dimension increases, so does performance. Even multi-vector models struggle. Lexical models like BM25 do very well due to their higher dimensionality. Stars indicate models trained with MRL.}
    \vspace{-1em}
    \label{fig:real}
\end{figure*}

Our last step is to choose a qrel matrix to instantiate these attributes. Although we could not prove the hardest qrel matrix definitively with theory, we intuit that our theoretical results imply that the more interconnected the qrel matrix is (e.g. dense with all combinations) the harder it would be for models to represent.
Following this, we use the qrel matrix with the highest number of documents for which all combinations would be just above 1000 queries for a top-$k$ of 2 (46 docs, since $\binom{46}{2}$ is 1035, the smallest above 1k). 

We then assign random natural language attributes to the queries, adding these attributes to their respective relevant documents (c.f. Figure~\ref{fig:teaser}). We give each document a random first and last name from open-source lists of names.  Finally, we randomly sample new attributes for each document until all documents have the same number of attributes. As this setup has many more documents than those that are relevant to any query (46 relevant documents, 49.95k non-relevant to any query) we also create a "small" version with only the 46 documents that are relevant to one of the 1000 queries.

\textbf{Models} \hspace{0.1em} We evaluate the state-of-the-art embedding models including GritLM \citep{muennighoff2024generative}, Qwen 3 Embeddings \citep{qwen3embedding}, Promptriever \citep{weller2024promptriever}, Gemini Embeddings \citep{lee2025gemini}, Snowflake's Arctic Embed Large v2.0 \citep{yu2024arctic}, and E5-Mistral Instruct \citep{wang2022text,wang2023improving}. These models range in embedding dimension (1024 to 4096) as well as in training style (instruction-based, hard negative optimized, etc.). We also evaluate three non-single vector models to show the distinction: BM25 \citep{robertson1995okapi,lu2024bm25s}, gte-ModernColBERT \citep{GTE-ModernColBERT,PyLate}, and a token-wise TF-IDF.\footnote{This model turns each unique item into a token and then does TF-IDF. We build it to show that it gets 100\% on all tasks (as it reverse engineers our dataset construction) and thus we do not include it in future charts.} 

We show results at the full embedding dimension and also with truncated embedding dimension (typically used with matryoshka learning, aka MRL \citep{kusupati2022matryoshka}). For models not trained with MRL this will result in sub-par scores, thus, models trained with MRL are indicated with stars in the plots. However, as there are no LLMs with an embedding dimension smaller than 384, we include MRL for all models to small dimensions (32) to show the impact of embedding dimensionality.

 \begin{figure*}[t]
    \includegraphics[width=0.99\linewidth,trim=0.75cm 0cm 0.75cm 3cm]{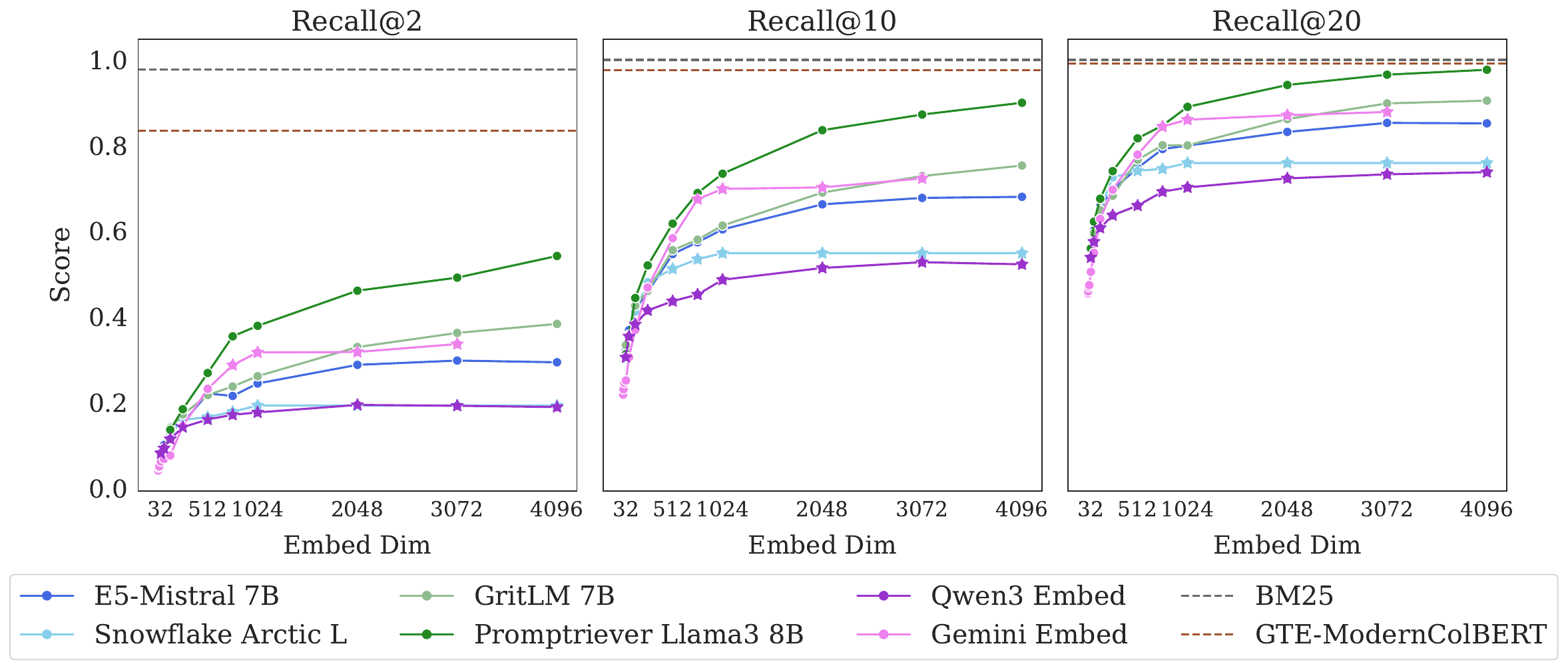}
    \captionsetup{justification=raggedright, singlelinecheck=false}
    \caption{Scores on the \datasetname\ small task (N=46) over embedding dimensions. Despite having just 46 documents, models struggle even with recall@10 and cannot solve the task even with recall@20.}
    \label{fig:real_small}
    \vspace{-1em}
\end{figure*}

\textbf{Results} \hspace{0.1em} Figure~\ref{fig:real} shows the results on the full \datasetname{} while Figure~\ref{fig:real_small} shows the results on the small (46 document) version. \textbf{The results are surprising - models severely struggle even though the task is trivially simple.} For example, in the full setting models struggle to reach even 20\% recall@100 and in the 46 document version models cannot solve the task even with recall@20.

\begin{wrapfigure}[16]{r}{0.4\textwidth}
    \vspace{-1em}
    \includegraphics[width=0.4\textwidth]{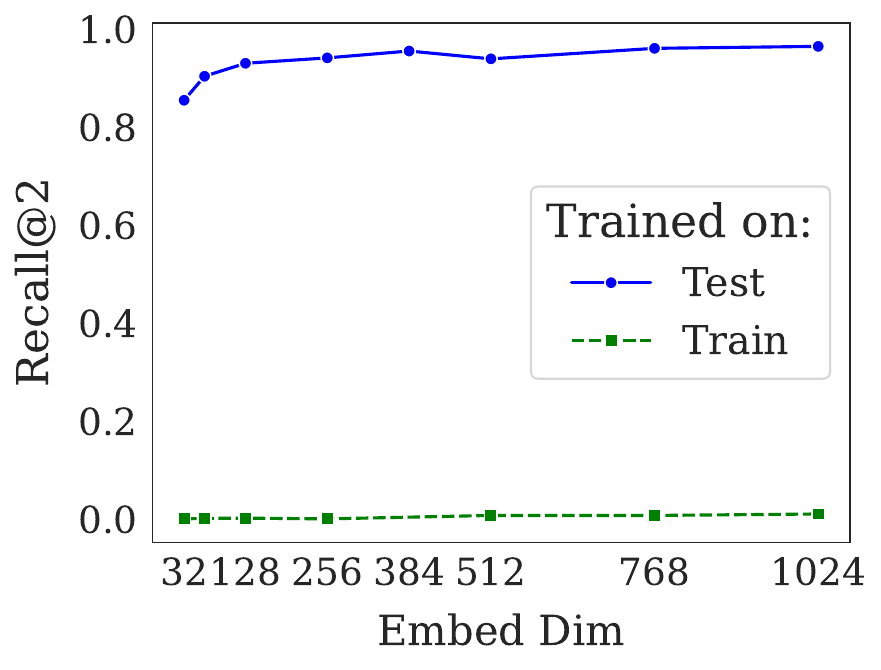}
    \captionsetup{justification=raggedright, singlelinecheck=false}
    \vspace{-1.5em}
    \caption{Training on \datasetname{} train does not significantly help, indicating the issue is not domain shift. But models can solve it if they overfit to the test set.}
    \label{fig:fine_tuning}
\end{wrapfigure}

We see that model performance depends crucially on the embedding dimensionality (better performance with bigger dimensions).  Interestingly, models trained with more diverse instruction, such as Promptriever, perform better, perhaps because their training allows them to use more of their embedding space (compared to models which are trained with MRL and on a smaller range of tasks that can perhaps be consolidated into a smaller embedding manifold). 

For alternative architectures, GTE-ModernColBERT does significantly better than single-vector models (although far from solving the task) while BM25 comes close to perfect scores. Both of these alterative architectures (sparse and multi-vector) offer various trade-offs, see \cref{sec:alternatives} for analysis.

\textbf{Is this Domain Shift?} \hspace{0.1em}
Although our queries look similar to standard web search queries, we wondered whether there could be some domain shift causing the low performance. If so, we would expect that training on a training set of similar examples would significantly improve performance. On the other hand, if the task was intrinsically hard, training on the training set would provide little help whereas training on the test set would allow the model to overfit to those tokens (similar to the free embeddings).

To test this we take an off-the-shelf embedding model and train it on either the training set (created synthetically using non-test set attributes) or the official test set of \datasetname. We use \texttt{lightonai/modernbert-embed-large} \citep{modernBERT-embed-large} and fine-tune it on these splits, using the full dataset for in batch negatives (excluding positives) using SentenceTransformers \citep{reimers-2019-sentence-bert}. We show a range of dimensions by projecting the hidden layer down to the specified size during training (rather than using MRL).

\Cref{fig:fine_tuning} shows the model trained on the training set cannot solve the problem, although it does see very minor improvement from near zero recall@10 to up to 2.8 recall@10. The lack of performance gains when training in-domain indicate that poor performance is not due to domain shift. By training the model on the test set we see it can learn the task, overfitting on the tokens in the test queries. This aligns with our free embedding results, that it is possible to overfit to the $N=46$ version with only 12 dimensions. However, it is notable that the real models with 64 dimensions still cannot completely solve the task, implying \textbf{real models perform significantly worse than the bounds shown in \S\ref{sec:free_embeds}}.

\textbf{What about Non-Lexical Matches?} \hspace{0.1em}
Our previous results show that lexical models greatly outperform their neural counterparts. However, this is not to imply that lexical models are a panacea - although they have higher dimensionality than single-vector models, they have other shortcomings. 

We illustrate this by creating a version of LIMIT-small that replaces all items in the corpus with their synonyms, reducing the amount of lexical overlap. We ask Gemini 2.5 Pro to come up with synonyms that don’t match any other existing synonyms or original items, by using either scientific names, similar meanings, or if necessary hypernyms. This creates a mapping like ``glasses” to ``spectacles”, etc.\footnote{We note that it doesn't remove all lexical overlap due to items like “Scuba Diving” -> “Underwater Diving”} We repeat the previous experiment on LIMIT-small (synonyms) and compare to LIMIT-small in Figure~\ref{fig:limit-syn}. We find that all models drop in performance as the task is now more difficult, but BM25 drops the most and now underperforms the neural models (e.g. BM25 drops more than 89\% whereas Qwen3 embedding drops 38.9\%). Thus, we can see that although lexical models have strengths like higher dimensionality, they are limited by their keyword-only matching ability. We expand more upon their strengths and weaknesses for instruction-following in Section~\ref{sec:alternatives}.

 \begin{figure*}[t]
    \includegraphics[width=0.99\linewidth,trim=0.75cm 1cm 0.75cm 4cm]{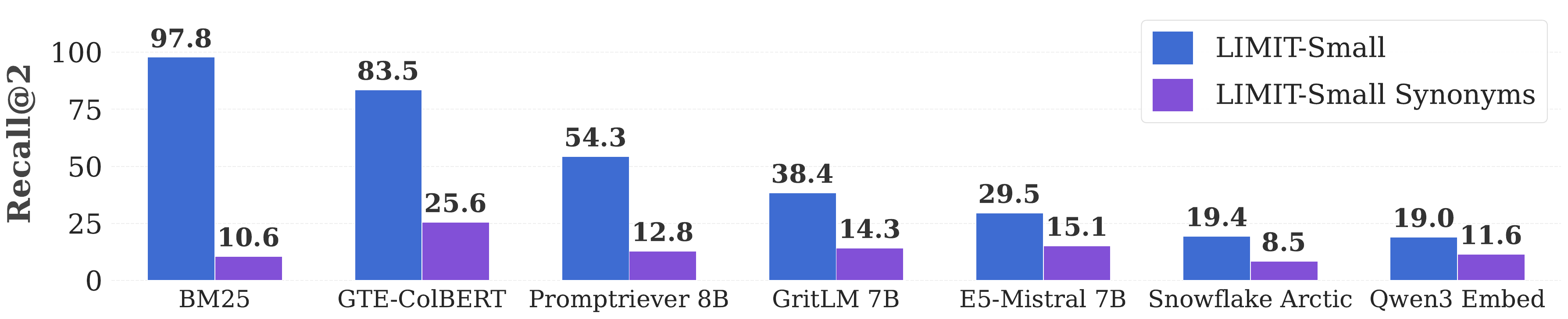}
    \captionsetup{justification=raggedright, singlelinecheck=false}
    \caption{Comparing scores on \datasetname{} small vs \datasetname{} small (synonym). Using synonyms makes the task harder so all models perform worse. However, the lexical model (BM25) drops nearly 90\%, performing worse than most single-vector models on the synonym version of \datasetname{}. Thus, lexical models have weaknesses of their own (see Section~\ref{sec:alternatives} for more discussion) and are not a panacea.}
    \label{fig:limit-syn}
    \vspace{-1em}
\end{figure*}

\textbf{Implications} \hspace{0.1em}
Single-vector models are fundamentally limited by their embedding dimension. The LIMIT dataset is a particular instantiation, with very simple queries and documents, designed to highlight this property. 
This small version of LIMIT can be embedded in just 12 dimensions (as seen in the free embeddings experiments), yet all models fail to perform well, suggesting other architectural weaknesses.
Irrespective of the architecture involved, however, our framework can scale the dataset's difficulty to consistently demonstrate this fundamental limitation.

\subsection{Alternatives to Embedding Models}
\label{sec:alternatives}
Our previous results show both theoretically and empirically that embedding models cannot represent all combinations of documents in their top-$k$ sets, making them unable to represent and solve some retrieval tasks. As current embedding models have grown larger (e.g. up to 4096), this has helped reduce negative effects for smaller dataset sizes. However, with enough combinations of top-$k$ sets the dimensionality would have to increase to an infeasible size for non-toy datasets. Thus, although they are useful for first stage results, more expressive retriever architectures will be needed.

\textbf{Cross-Encoders} \hspace{0.1em} Although not suitable for first stage retrieval at scale, they are already typically used to improve first stage results. Is \datasetname{} challenging for rerankers also? We evaluate a long context reranker, Gemini-2.5-Pro \citep{comanici2025gemini} on the small setting as a comparison. We give Gemini all 46 documents and all 1000 queries at once, asking it to output the relevant documents for each query with one generation. We find that it can successfully solve (100\%) all 1000 queries in one forward pass. This is in contrast to even the best embedding models with a recall@2 of less than 60\% (Figure~\ref{fig:real_small}). Thus we can see that \datasetname{} is easy for state-of-the-art reranker models, which do not have the same limitations based on embedding dimension. 

\textbf{Multi-vector models} \hspace{0.1em}  Multi-vector models are more expressive through the use of multiple vectors per sequence combined with the MaxSim operator \citep{khattab2020colbert}. These models show promise on the \datasetname{} dataset, with scores greatly above the single-vector models despite using a smaller backbone (ModernBERT, \citet{warner2024smarter}). However, these models are not generally used for instruction-following or reasoning-based tasks (see \citet{Reason-ModernColBERT} as one of the few that exist), leaving it an open question to how well multi-vector techniques will transfer to these tasks.

\textbf{Sparse models} \hspace{0.1em}  Sparse models (both lexical and neural) can be thought of as single vectors but with very high dimensionality. This dimensionality helps BM25 avoid the problems of the neural embedding models as seen in Figure~\ref{fig:real}. Since the $d$ of their vectors is high, they can scale to many more combinations than their dense vector counterparts. However, it is less clear how to apply sparse models to instruction-following and reasoning-based tasks where there is no lexical or even paraphrase-like overlap. We leave this direction (and hybrid sparse/dense solutions) to future work.

We note that all of these options have various trade-offs and none provide a clear path to solving this problem as-is. We leave it to future work to develop new techniques to mitigate these issues: perhaps through one of these alterative categories or through new ideas around single-vector models that can resolve the underlying issue (potentially through techniques such as hyperencoders \citep{killingback2025hypencoder} or other future work on single vector architectures yet to be developed).

\section{Conclusion}
We introduce the \datasetname\ dataset, which highlights a fundamental limitation of embedding models. We provide a theoretical connection which shows that, for a fixed embedding dimension there will be some set of documents such that certain sets are unattainable as top-$k$ sets.
We show these theoretical results hold empirically, through best case optimization of the vectors themselves, and make a practical connection to existing state-of-the-art models by creating a realistic and simple instantiation of the theory, called \datasetname{}, that these models cannot solve. 
Our results imply that the community should reconsider how instruction-based retrieval will impact future retrievers.

\section*{Limitations}
Although our experiments provide theoretical insight for the most common type of embedding model (single vector) they do not hold necessarily for other architectures, such as multi-vector models. Although we showed initial empirical results with non-single vector models, we leave it to future work to extend our theoretical connections to these settings. We also did not show theoretical results for the setting where the user allows some mistakes, e.g. capturing only the majority of the combinations. We leave putting a bound on this scenario to future work and would invite the reader to examine works like \citet{ben2002limitations}. 

We have shown the theoretical connection that proves that some combinations cannot be represented by embedding models, however, we cannot prove apriori which \textit{types} of combinations they will fail on. Thus, it is possible that there are some instruction-following or reasoning tasks they can solve perfectly, however, \textit{we do know} that there exists some tasks that they will never be able to solve.

\section*{Acknowledgments} We thank Tanmaya Dabral, Zhongli Ding, Anthony Chen, Ming-Wei Chang, Kenton Lee, and Kristina Toutanova for their helpful feedback. We thank Kiril Bangachev, Guy Bresler, Iliyas Noman, Yury Polyanskiy, Antonio Vergari, Adam Lopez, and Andreas Grivas for pointers to work on sign-rank.

\bibliography{template_refs}
\bibliographystyle{iclr2026_conference}

\appendix

\newpage

\section{Relationship to Order-K Voronoi Regions}
We also provide an explanation for how our results compare to \citet{clarkson1988applications} which put bounds on the number of regions in the order-$k$ Voronoi graph. The order-$k$ Voronoi graph is defined as the set of points having a particular set of $n$ points in $S$ as its $n$ nearest neighbors. This maps nicely to retrieval, as each order-$k$ region is equivalent to one retrieved set of top-$k$ results. Then the count of unique regions in the Voronoi graph is the total number of combinations that could be returned for those points. However, creating an empirical order-k Voronoi graph is computationally infeasible for $d$ > 3, and theoretically it is hard to bound tightly. Thus we use a different approach for showing the limitations of embedding models.

\section{Hyperparameter and Compute Details}
\paragraph{Inference} We use the default length settings for evaluating models using the MTEB framework \citep{enevoldsen2025mmteb}. As our dataset has relatively short documents (around 100 tokens), this does not cause an issue. 

\paragraph{Training} For training on the \datasetname{} training and test set we use the SentenceTransformers library \citep{reimers-2019-sentence-bert} using the MultipleNegativesRankingLoss. We use a full dataset batch size and employ the no duplicates sampler to ensure that no in-batch negatives are duplicates of the positive docs. We use a learning rate of 5e-5. We train for 5 epochs and limit the training set slightly to the size of the test set (from 2.5k to 2k examples, matching test).

\paragraph{Compute} Inference and training for \datasetname{} is done with A100 GPUs on Google Colab Pro. The free embedding experiments are done mainly on H100 GPUs and TPU v5's for larger size $N$ to accommodate higher VRAM for full-dataset batch vector optimization.

\section{Sigmoid Loss Function for Free Embeddings}
\label{app:sigmoid}
In concurrent work by \citet{bangachev2025global}, they show that for vision-language embedding models like CLIP (that more commonly use sigmoid loss functions) that the free-embedding experiments can be solved in fewer dimensions than in our setting (assuming  no margin). As our results found the best performance with InfoNCE, which attempts to create the widest possible margin, this indicates that there are additional questions to resolve around learnability. We welcome further insight into this question both theoretically and empirically, as there exists widely disparate practices between the vision-language community (where sigmoid loss functions are often SOTA) and the text-only community (where sigmoid loss functions are almost never used due to worse performance). 

This sigmoid learning is also closely related to other work, such as \citet{grivas2024taming} and generally on the topic of work such as \citep{badreddine2025breaking,paul2021multi}

\section{Proof using Sign-Rank}
In the initial version of this paper, we provided a theoretical bound without any margin requirement based on the qrel matrices \emph{sign rank}. Although the proof is correct, the sign rank of the $\binom{n}{k}$ matrix has been established in previous work \citep{alongeometrical} and only depends on k. We include the proof connecting notions relevant for retrieval with the classic notion of sign-rank, however we emphasize that this will provide a weaker requirement on dimension as it assumes no margin.

\subsection{Formalization}

We consider a set of $m$ queries and $n$ documents with a ground-truth relevance matrix $A \in \{0, 1\}^{m \times n}$, where $A_{ij} = 1$ if and only if document $j$ is relevant to query $i$.\footnote{The matrix $A$ is often called the "qrels" (query relevance judgments) matrix in information retrieval.} Vector embedding models map each query to a vector $u_i \in \mathbb{R}^d$ and each document to a vector $v_j \in \mathbb{R}^d$. Relevance is modeled by the dot product $u_i^T v_j$, with the goal that relevant documents should score higher than irrelevant ones.

Concatenating the vectors for queries in a matrix $U \in \RR^{d \times m}$ and those for documents in a matrix $V \in \RR^{d \times n}$, these dot products are the entries of the score matrix $B = U^TV$. The smallest embedding dimension $d$ that can realize a given score matrix is, by definition, the rank of $B$. Therefore, our goal is equivalent to finding the minimum rank of a score matrix $B$ that correctly orders documents according to the relevance specified in $A$, which we formalize in the following definition.

\begin{definition}
    Given a matrix $A \in \RR^{m \times n}$, the \defterm{row-wise order-preserving rank of $A$} is the smallest integer $d$ such that there exists a rank-$d$ matrix $B$ that preserves the relative order of entries in each row of $A$. We denote this as
    \[\rank_\text{rop}A = \min \{\rank B \mid B \in \mathbb R^{m \times n}, \text{ such that for all } i, j, k, \text{ if } A_{ij} > A_{ik} \text{ then } B_{ij} > B_{ik}\}.\]
\end{definition}

In other words, if $A$ is a binary ground-truth relevance matrix, $\rank_\text{rop} A$ is the minimum dimension necessary for any vector embedding model to return relevant documents before irrelevant ones for all queries.  Alternatively, we might require that the scores of relevant documents can be cleanly separated from those of irrelevant ones by a threshold.

\begin{definition}
    Given a binary matrix $A\in \{0,1\}^{m \times n}$:
    \begin{itemize}[leftmargin=0.9em]
        \item The \defterm{row-wise thresholdable rank of $A$} ($\rank_\text{rt} A$) is the minimum rank of a matrix $B$ for which there exist row-specific thresholds $\{\tau_i\}_{i=1}^m$ such that for all $i,j$, $B_{ij} > \tau_i$ if $A_{ij} = 1$ and $B_{ij} < \tau_i$ if $A_{ij} = 0$.
        \item The \defterm{globally thresholdable rank of $A$} ($\rank_\text{gt} A$) is the minimum rank of a matrix $B$ for which there exists a single threshold $\tau$ such that for all $i,j$, $B_{ij} > \tau$ if $A_{ij} = 1$ and $B_{ij} < \tau$ if $A_{ij} = 0$.
    \end{itemize}
\end{definition}

\begin{remark}
\label{rem: threshold separation}
This two-sided separation condition may be seen as slightly stronger than requiring $B_{ij} > \tau_i$ if and only if $A_{ij} = 1$, however since there are only finitely many elements of $B_{ij}$ we could always perturb the latter threshold by a sufficient number such that the two-sided condition holds.\footnote{Without loss of generality, we may assume the thresholds in the above definitions are not equal to any elements of $B$ since we could increase the threshold of $\tau$ by a sufficiently small $\epsilon$ to preserve the inequality.}
\end{remark}

\subsection{Theoretical Bounds}

For binary matrices, row-wise ordering/thresholding are equivalent notions of representation capacity.

\begin{proposition}
\label{prop: rop equal to rt}
For a binary matrix $A\in\{0,1\}^{m \times n}$, we have that $\rank_\text{rop} A = \rank_\text{rt} A$.
\end{proposition}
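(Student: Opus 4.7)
The plan is to prove the equality by showing both inequalities, using the \emph{same} matrix $B$ as a witness in each direction.

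First I would show $\rank_\text{rt} A \leq \rank_\text{rop} A$. Let $B$ be any rank-$d$ matrix witnessing $\rank_\text{rop} A$, so that within every row $i$, every entry $B_{ij}$ with $A_{ij}=1$ strictly exceeds every entry $B_{ik}$ with $A_{ik}=0$. For each row in which both labels appear, define
\[
\tau_i \;=\; \tfrac{1}{2}\Bigl(\min_{j:\,A_{ij}=1} B_{ij} + \max_{k:\,A_{ik}=0} B_{ik}\Bigr),
\]
which lies strictly between the two sets by the order-preserving property, giving $B_{ij} > \tau_i$ exactly when $A_{ij}=1$. The only edge case is a row that is identically $0$ or identically $1$ (no order constraint is imposed), but for such a row any threshold respectively above or below all $B_{ij}$ works. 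Hence $B$ together with $\{\tau_i\}$ witnesses $\rank_\text{rt} A \leq d$.

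Next I would show $\rank_\text{rop} A \leq \rank_\text{rt} A$. Let $B$ be any rank-$d$ matrix and $\{\tau_i\}$ a family of thresholds witnessing $\rank_\text{rt} A$. For a binary matrix the only nontrivial instance of $A_{ij} > A_{ik}$ (in row $i$) is $A_{ij}=1$ and $A_{ik}=0$; the thresholding property then gives $B_{ij} > \tau_i > B_{ik}$, so the same $B$ is row-wise order-preserving. This yields $\rank_\text{rop} A \leq d$.

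Combining the two inequalities gives $\rank_\text{rop} A = \rank_\text{rt} A$, as required. There is essentially no hard step here: the whole proposition is a definitional equivalence that relies on the binarity of $A$ to collapse the order constraints in each row into a single separation between the ``$1$-entries'' and the ``$0$-entries.'' The only place I would be careful is the edge case of constant rows, which is handled by choosing any separating threshold; and the observation (already noted in Remark~\ref{rem: threshold separation}) that strict two-sided separation can always be arranged by an infinitesimal perturbation of $\tau_i$, so the strict inequalities in the definition pose no obstacle.
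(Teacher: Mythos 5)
Your proof is correct and follows essentially the same approach as the paper's: both directions are established by reusing the same witness matrix $B$, with the threshold $\tau_i$ chosen as the midpoint between the maximum of the ``$0$-entries'' and the minimum of the ``$1$-entries'' in each nontrivial row. The paper handles the constant-row edge case by simply noting it is ``trivial otherwise,'' which matches your slightly more explicit treatment.
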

\begin{proof}
($\le$) Suppose $B$ and $\tau$ satisfy the row-wise thresholdable rank condition. Since $A$ is a binary matrix $A_{ij} > A_{ik}$ implies $A_{ij} = 1$ and $A_{ik} = 0$, thus $B_{ij} > \tau_i > B_{ik}$, and hence $B$ also satisfies the row-wise order-preserving condition.

($\ge$) Let $B$ satisfy the row-wise order-preserving condition, so $A_{ij} > A_{ik}$ implies $B_{ij} > B_{ik}$. For each row $i$, let $U_i = \{B_{ij} \mid A_{ij}=1\}$ and $L_i = \{B_{ij} \mid A_{ij} = 0\}$. The row-wise order-preserving condition implies that every element of $U_i$ is greater than every element of $L_i$. We can therefore always find a threshold $\tau_i$ separating them (\eg $\tau_i = (\max L_i + \min U_i) / 2$ if both are non-empty, trivial otherwise). Thus $B$ is also row-wise thresholdable to $A$.
\end{proof}

The notions we have described so far are closely related to the sign rank of a matrix, which we use in the rest of the paper to establish our main bounds.

\begin{definition}[Sign Rank] The sign rank of a matrix $M \in \{-1,1\}^{m\times n}$ is the smallest integer $d$ such that there exists a rank $d$ matrix $B \in \mathbb R^{m \times n}$ whose entries have the same sign as those of $M$, i.e.
\[\rank_\pm{M} = \min \{\rank B \mid B \in \mathbb R^{m \times n} \text{ such that for all } i, j \text{ we have } \sign B_{ij} = M_{ij}\}.\]
\end{definition}

In what follows, we use $\mathbf 1_n$ to denote the $n$-dimensional vector of ones, and $\mathbf 1_{m \times n}$ to denote an $m \times n$ matrix of ones.
\begin{proposition}
\label{prop: ordering relationship of ranks}
Let $A\in \{0,1\}^{m \times n}$ be a binary matrix. Then $2A - \mathbf 1_{m \times n} \in \{-1,1\}^{m \times n}$ and
\[\rank_\pm (2A - \mathbf 1_{m \times n}) - 1\le \rank_\text{rop} A = \rank_\text{rt} A \le \rank_\text{gt} A \le \rank_\pm (2A - \mathbf 1_{m \times n})\]
\end{proposition}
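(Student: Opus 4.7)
The chain has four inequalities to verify, and my plan is to handle them from easiest to hardest, reusing Proposition~2 for the middle equality $\rank_\text{rop} A = \rank_\text{rt} A$ so it requires no further work. The fact that $2A - \mathbf{1}_{m\times n} \in \{-1,1\}^{m\times n}$ follows immediately since $A_{ij}\in\{0,1\}$ maps to $2A_{ij}-1 \in \{-1,1\}$.

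For $\rank_\text{rt} A \le \rank_\text{gt} A$, the argument is essentially by definition: any $B$ achieving the globally thresholdable condition with threshold $\tau$ also satisfies the row-wise condition with the constant choice $\tau_i = \tau$ for every $i$. Hence the minimum rank of any such $B$ for the global problem is at least that for the row-wise problem. For $\rank_\text{gt} A \le \rank_\pm(2A - \mathbf{1}_{m\times n})$, I would take any $B$ of minimum rank with $\sign B = 2A - \mathbf{1}_{m\times n}$; then $B_{ij} > 0$ if $A_{ij}=1$ and $B_{ij} < 0$ if $A_{ij}=0$, so the constant threshold $\tau = 0$ witnesses the globally thresholdable condition for the same $B$.

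The only non-trivial inequality is the leftmost one, $\rank_\pm(2A - \mathbf{1}_{m\times n}) - 1 \le \rank_\text{rop} A$. Here the plan is to start with a matrix $B$ witnessing $\rank_\text{rop} A = \rank_\text{rt} A$, together with its row thresholds $\tau_1,\dots,\tau_m$ supplied by Proposition~\ref{prop: rop equal to rt}. I would then define $B' = B - \boldsymbol{\tau}\,\mathbf 1_n^T$, where $\boldsymbol{\tau} = (\tau_1,\dots,\tau_m)^T$. By construction $B'_{ij} > 0 \iff A_{ij}=1$ and $B'_{ij} < 0 \iff A_{ij}=0$, so $\sign B' = 2A - \mathbf{1}_{m\times n}$, giving $\rank_\pm(2A - \mathbf{1}_{m\times n}) \le \rank B'$. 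Since $\boldsymbol{\tau}\,\mathbf 1_n^T$ has rank at most one, subadditivity of rank gives $\rank B' \le \rank B + 1 = \rank_\text{rop} A + 1$, and rearranging yields the claim.

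The only place I expect any subtlety is this last step: one must be sure that subtracting $\tau_i$ from each row, using the row thresholds from Proposition~\ref{prop: rop equal to rt}, does indeed produce a matrix whose entries are strictly nonzero of the right sign (so that the sign rank inequality applies). This is guaranteed by the two-sided separation convention flagged in Remark~\ref{rem: threshold separation}, which lets us choose each $\tau_i$ strictly between the two sets $\{B_{ij} : A_{ij}=1\}$ and $\{B_{ij} : A_{ij}=0\}$, so no entry of $B'$ lands on zero. With that observation in place the proof is essentially a one-line rank-plus-one perturbation argument.
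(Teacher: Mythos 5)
Your proposal is correct and mirrors the paper's own proof step by step: the two easy inequalities are handled exactly as written in the paper, and the leftmost inequality is proved via the same rank-one perturbation $B' = B - \boldsymbol{\tau}\,\mathbf 1_n^T$ together with subadditivity of rank. The attention you draw to Remark~\ref{rem: threshold separation} to guarantee strict separation (so $\sign B'$ is well-defined and equals $2A-\mathbf 1_{m\times n}$) is precisely the hypothesis the paper relies on as well.
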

\begin{proof}
N.b. the equality was already shown in \Cref{prop: rop equal to rt}. We prove each inequality separately.

\textbf{1. $\rank_\text{rt} A \le \rank_\text{gt} A$:} True by definition, since any matrix satisfying the globally thresholdable condition trivially satisfies a row-wise thresholdable condition with the same threshold for each row.

\textbf{2. $\rank_\text{gt} A \le \rank_\pm(2A-\mathbf 1_{m \times n})$:} Let $B$ be any matrix whose entries have the same sign as $2A-\mathbf 1_{m \times n}$, then 
\[B_{ij} > 0 \iff 2A_{ij}-1> 0 \iff A_{ij} = 1.\]
Thus $B$ satisfies the globally thresholdable condition with a threshold of $0$.

\textbf{3. $\rank_\pm(2A-\mathbf 1 _{m \times n}) - 1 \le \rank_\text{rt}A$:}
Suppose $B$ satisfies the row-wise thresholdable condition with minimal rank, so $\rank_\text{rt} A = \rank B$ and there exists $\tau \in \RR^m$ such that $B_{ij} > \tau_i$ if $A_{ij} = 1$ and $B_{ij} < \tau_i$ if $A_{ij} = 0$. 
Then the entries of $B - \tau \mathbf 1_n^T$ have the same sign as $2A-\mathbf 1_{m \times n}$, since $(B - \tau \mathbf 1_n^T)_{ij} = B_{ij} - \tau _i$ and
\begin{align}
B_{ij} - \tau_i > 0 &\iff A_{ij} = 1 \iff 2A_{ij} - 1 > 0,\text{ and}\\
B_{ij} - \tau_i < 0 &\iff A_{ij} = 0 \iff 2A_{ij} - 1 < 0.
\end{align}
Thus $\rank_\pm(2A - \mathbf 1_{m \times n}) \le \rank(B - \tau \mathbf 1_n^T) \le \rank(B) + \rank(\tau \mathbf 1_n^T) = \rank_\text{rt} A + 1$.

Combining these gives the desired chain of inequalities.
\end{proof}

\subsection{Consequences}
In the context of a vector embedding model, this provides a lower and upper bound on the dimension of vectors required to exactly capture a given set of retrieval objectives, in the sense of row-wise ordering, row-wise thresholding, or global thresholding. In particular, given some binary relevance matrix $A\in \{0,1\}^{m \times n}$, we need at least $\rank_\pm (2A - \mathbf 1_{m \times n}) - 1$ dimensions to capture the relationships in $A$ exactly, and can always accomplish this in at most $\rank_\pm(2A - \mathbf 1_{m \times n})$ dimensions.

The cyclotomic polynomial construction presented in \citet{alongeometrical} implies that any qrel matrix has sign-rank at most $2k$, where $k$ is the largest number of documents for a particular query. The construction results in unnormalized vectors, however this can be easily adapted to normalized vectors by using one additional dimension. In agreement with \Cref{thm:margin-lb}, this construction requires infinite precision in general, and is thus not feasible in practice.

\subsection{Correlation with MTEB}

\begin{wrapfigure}[7]{r}{0.35\textwidth}
    \vspace{-4em}
    \includegraphics[width=0.35\textwidth]{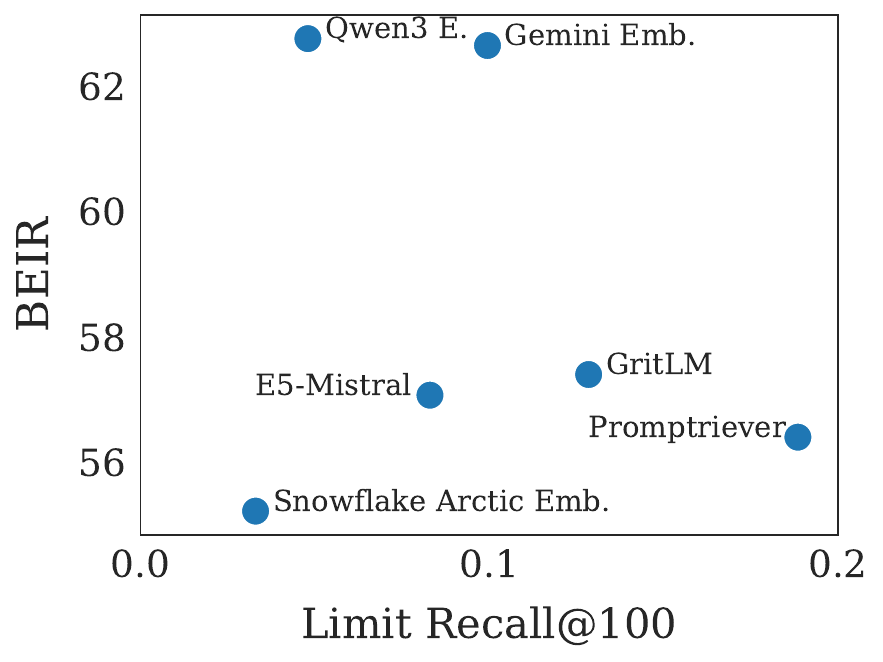}
    \vspace{-2em}
    \caption{\small{No obvious correlation between BEIR vs LIMIT.}}
    \label{fig:mteb_vs_limit}
\end{wrapfigure}

BEIR (used in MTEB v1) \citep{thakur2021beir,muennighoff2022mteb} has frequently been cited as something that embedding models have overfit to \citep{weller2025rank1,thakur2025freshstack}. We compare performance on \datasetname{} to BEIR in Figure~\ref{fig:mteb_vs_limit}. We see that performance is generally not correlated and that smaller models (like Arctic Embed) do worse on both, likely due to embedding dimension and pre-trained model knowledge.

\section{LLM Usage}
LLMs were not used for any paper writing, only for coding help and title brainstorming.

\section{Metrics Measuring Qrel Graph Density}
\label{app:measures}
We show two metrics that treat the qrel matrix as a graph and show that \datasetname{} has unique properties compared to standard IR datasets (Table~\ref{tab:metrics}). We call these metrics Graph Density and Average Query Strength and describe them below. 

\paragraph{Graph Density}
We use the qrel matrix to construct the graph, where nodes are documents and an edge exists between two documents if they are both relevant to at least one common query.

For a given graph $G=(V, E)$ with $V$ being the set of nodes and $E$ being the set of edges, the graph density is defined as the ratio of the number of edges in the graph to the maximum possible number of edges. For an undirected graph, the maximum possible number of edges is $\frac{|V|(|V|-1)}{2}$. Thus, the density $\rho$ is calculated as:

$$
\rho = \frac{|E|}{\frac{|V|(|V|-1)}{2}} = \frac{2|E|}{|V|(|V|-1)}
$$

This metric indicates how connected the graph is; a density of 1 signifies a complete graph (all possible edges exist), while a density close to 0 indicates a sparse graph. For a qrel dataset, the

\paragraph{Average Query Strength}
In a query-query graph where nodes are queries and edges represent similarity between queries (e.g., Jaccard similarity of their relevant documents), the \textit{strength} of a query node $i$, denoted $s_i$, is defined as the sum of the weights of all edges incident to it. If $w_{ij}$ is the weight of the edge between query $i$ and query $j$, and $N(i)$ is the set of neighbors of query $i$, then the strength is:

$$
s_i = \sum_{j \in N(i)} w_{ij}
$$

The Average Query Strength $\bar{s}$ is the mean of these strengths across all query nodes in the graph:

$$
\bar{s} = \frac{1}{|V_Q|} \sum_{i \in V_Q} s_i
$$

where $V_Q$ is the set of all query nodes in the graph. This metric provides an overall measure of how strongly connected queries are to each other on average within the dataset, based on their shared relevant documents.

\paragraph{Comparisons to other datasets}
We compare with standard IR Datasets such as NQ \citep{kwiatkowski2019natural}, HotpotQA \citep{yang2018hotpotqa}, and SciFact \citep{wadden2020fact}. We also show an instruction-following dataset, FollowIR Core17 \citep{weller2024followir}. For all datasets, we use the test set only. The results in Table~\ref{tab:metrics} show that \datasetname{} has significantly higher values for both of these metrics (i.e. 28 for query similarity compared to 0.6 or lower for the others).

\begin{table}[h!]
    \centering
    \captionsetup{justification=raggedright, singlelinecheck=false}
    \caption{Metrics measuring the density of the qrel matrix. We see that \datasetname{} is significantly higher than other datasets, but that the closest are instruction-following datasets such as Core17 from FollowIR. Our empirical ablations suggest (although cannot definitively prove) that datasets with higher values here will be harder for retrieval models to represent.}
    \label{tab:metrics}
    \begin{tabular}{lrr}
        \toprule
        \textbf{Dataset Name} & \textbf{Graph Density} & \textbf{Average Query Strength} \\
        \midrule
        NQ & 0 & 0 \\
        HotPotQA & 0.000037 & 0.1104 \\
        SciFact & 0.001449 & 0.4222 \\
        FollowIR Core17 & 0.025641 & 0.5912 \\
        \datasetname{} & 0.085481 & 28.4653 \\
        \bottomrule
    \end{tabular}
\end{table}

\section{Table Forms of Figures}
In this section we show the table form of various figures. For Figure~\ref{fig:real} it is Table~\ref{tab:real}, Figure~\ref{fig:real_small} in Table~\ref{tab:real_small}, Figure~\ref{fig:critical_n} in Table~\ref{tab:critical_values}, and Figure~\ref{fig:fine_tuning} in Table~\ref{tab:fine_tuning}.

\begin{table}
\centering
\begin{tabular}{ll|rrr}
\toprule
Split & Dim & Recall@2 & Recall@10 & Recall@100 \\
\midrule
Test & 32 & 85.5 & 98.4 & 100.0 \\
Test & 64 & 90.4 & 98.7 & 100.0 \\
Test & 128 & 93.1 & 99.5 & 99.9 \\
Test & 256 & 94.2 & 99.7 & 100.0 \\
Test & 384 & 95.6 & 99.6 & 100.0 \\
Test & 512 & 94.0 & 99.5 & 99.9 \\
Test & 768 & 96.1 & 99.8 & 100.0 \\
Test & 1024 & 96.5 & 99.8 & 100.0 \\
\midrule
Train & 32 & 0.0 & 0.0 & 0.0 \\
Train & 64 & 0.1 & 0.3 & 2.2 \\
Train & 128 & 0.2 & 0.7 & 3.1 \\
Train & 256 & 0.0 & 0.0 & 0.4 \\
Train & 384 & 1.1 & 2.7 & 8.3 \\
Train & 512 & 0.7 & 2.3 & 9.8 \\
Train & 768 & 0.7 & 2.4 & 9.9 \\
Train & 1024 & 1.0 & 2.8 & 11.2 \\
\bottomrule
\end{tabular}
\caption{Fine-tuning results in table form. See Figure~\ref{fig:fine_tuning} for the comparable plot.}
\label{tab:fine_tuning}
\end{table}

\begin{table}
\scriptsize
\centering
\begin{tabular}{ll|rrr}
\toprule
Model & Dim & Recall@2 & Recall@10 & Recall@20 \\
\midrule
BM25 & default & 97.8 & 100.0 & 100.0 \\
E5-Mistral 7B & 32 & 7.9 & 32.6 & 56.2 \\
E5-Mistral 7B & 64 & 10.2 & 37.0 & 60.3 \\
E5-Mistral 7B & 128 & 14.5 & 41.9 & 65.9 \\
E5-Mistral 7B & 256 & 15.3 & 45.9 & 69.7 \\
E5-Mistral 7B & 512 & 22.2 & 54.7 & 74.8 \\
E5-Mistral 7B & 768 & 21.6 & 57.5 & 79.2 \\
E5-Mistral 7B & 1024 & 24.5 & 60.5 & 80.0 \\
E5-Mistral 7B & 2048 & 28.9 & 66.3 & 83.2 \\
E5-Mistral 7B & 3072 & 29.9 & 67.8 & 85.3 \\
E5-Mistral 7B & 4096 & 29.5 & 68.1 & 85.2 \\
GTE-ModernColBERT & default & 83.5 & 97.6 & 99.1 \\
GritLM 7B & 32 & 7.8 & 33.5 & 56.3 \\
GritLM 7B & 64 & 9.4 & 35.9 & 59.6 \\
GritLM 7B & 128 & 14.2 & 42.7 & 64.9 \\
GritLM 7B & 256 & 17.3 & 46.2 & 68.3 \\
GritLM 7B & 512 & 21.8 & 55.6 & 76.7 \\
GritLM 7B & 768 & 23.8 & 58.1 & 80.1 \\
GritLM 7B & 1024 & 26.2 & 61.4 & 80.1 \\
GritLM 7B & 2048 & 33.0 & 69.1 & 86.2 \\
GritLM 7B & 3072 & 36.3 & 72.9 & 89.9 \\
GritLM 7B & 4096 & 38.4 & 75.4 & 90.5 \\
Promptriever Llama3 8B & 32 & 6.1 & 31.4 & 56.0 \\
Promptriever Llama3 8B & 64 & 8.9 & 35.8 & 62.3 \\
Promptriever Llama3 8B & 128 & 13.7 & 44.5 & 67.6 \\
Promptriever Llama3 8B & 256 & 18.5 & 52.1 & 74.1 \\
Promptriever Llama3 8B & 512 & 27.0 & 61.8 & 81.7 \\
Promptriever Llama3 8B & 768 & 35.5 & 69.0 & 84.7 \\
Promptriever Llama3 8B & 1024 & 38.0 & 73.5 & 89.1 \\
Promptriever Llama3 8B & 2048 & 46.2 & 83.6 & 94.2 \\
Promptriever Llama3 8B & 3072 & 49.2 & 87.3 & 96.6 \\
Promptriever Llama3 8B & 4096 & 54.3 & 90.0 & 97.7 \\
Qwen3 Embed & 32 & 8.3 & 30.6 & 53.9 \\
Qwen3 Embed & 64 & 9.4 & 35.5 & 57.6 \\
Qwen3 Embed & 128 & 11.6 & 38.3 & 60.8 \\
Qwen3 Embed & 256 & 14.3 & 41.6 & 63.8 \\
Qwen3 Embed & 512 & 16.1 & 43.7 & 66.0 \\
Qwen3 Embed & 768 & 17.2 & 45.3 & 69.3 \\
Qwen3 Embed & 1024 & 17.8 & 48.7 & 70.3 \\
Qwen3 Embed & 2048 & 19.5 & 51.5 & 72.4 \\
Qwen3 Embed & 3072 & 19.3 & 52.8 & 73.3 \\
Qwen3 Embed & 4096 & 19.0 & 52.3 & 73.8 \\
Gemini Embed & 2 & 4.2 & 23.0 & 45.5 \\
Gemini Embed & 4 & 4.2 & 21.9 & 46.0 \\
Gemini Embed & 8 & 4.9 & 23.2 & 47.0 \\
Gemini Embed & 16 & 5.2 & 24.7 & 47.5 \\
Gemini Embed & 32 & 6.3 & 25.2 & 50.6 \\
Gemini Embed & 64 & 6.9 & 30.6 & 55.0 \\
Gemini Embed & 128 & 7.7 & 37.0 & 62.9 \\
Gemini Embed & 256 & 14.6 & 46.9 & 69.7 \\
Gemini Embed & 512 & 23.3 & 58.4 & 77.9 \\
Gemini Embed & 768 & 28.8 & 67.5 & 84.5 \\
Gemini Embed & 1024 & 31.8 & 69.9 & 86.1 \\
Gemini Embed & 2048 & 31.9 & 70.3 & 87.1 \\
Gemini Embed & 3072 & 33.7 & 72.4 & 87.9 \\
Snowflake Arctic L & 32 & 8.3 & 30.3 & 53.8 \\
Snowflake Arctic L & 64 & 9.0 & 35.4 & 58.5 \\
Snowflake Arctic L & 128 & 12.7 & 41.3 & 65.1 \\
Snowflake Arctic L & 256 & 16.0 & 48.2 & 72.6 \\
Snowflake Arctic L & 512 & 16.7 & 51.3 & 74.1 \\
Snowflake Arctic L & 768 & 17.9 & 53.5 & 74.6 \\
Snowflake Arctic L & 1024 & 19.4 & 54.9 & 76.0 \\
Snowflake Arctic L & 2048 & 19.4 & 54.9 & 76.0 \\
Snowflake Arctic L & 3072 & 19.4 & 54.9 & 76.0 \\
Snowflake Arctic L & 4096 & 19.4 & 54.9 & 76.0 \\
\bottomrule
\end{tabular}
\caption{Results for the \datasetname{} small version. See comparable Figure~\ref{fig:real_small}.}
\label{tab:real_small}
\end{table}

\begin{table}
\scriptsize
\centering
\begin{tabular}{ll|rrr}
\toprule
Model & Dim & Recall@2 & Recall@10 & Recall@100 \\
\midrule
E5-Mistral 7B & 32 & 0.0 & 0.0 & 0.5 \\
E5-Mistral 7B & 64 & 0.0 & 0.1 & 0.4 \\
E5-Mistral 7B & 128 & 0.1 & 0.3 & 1.0 \\
E5-Mistral 7B & 256 & 0.4 & 0.9 & 1.9 \\
E5-Mistral 7B & 512 & 0.7 & 1.3 & 3.8 \\
E5-Mistral 7B & 768 & 0.9 & 1.7 & 4.3 \\
E5-Mistral 7B & 1024 & 0.9 & 1.8 & 5.9 \\
E5-Mistral 7B & 2048 & 1.0 & 1.9 & 6.8 \\
E5-Mistral 7B & 3072 & 1.3 & 2.0 & 7.7 \\
E5-Mistral 7B & 4096 & 1.3 & 2.2 & 8.3 \\
Snowflake Arctic L & 32 & 0.0 & 0.1 & 0.6 \\
Snowflake Arctic L & 64 & 0.2 & 0.4 & 1.7 \\
Snowflake Arctic L & 128 & 0.1 & 0.3 & 1.8 \\
Snowflake Arctic L & 256 & 0.2 & 0.8 & 2.5 \\
Snowflake Arctic L & 512 & 0.3 & 1.0 & 2.5 \\
Snowflake Arctic L & 768 & 0.4 & 1.1 & 3.1 \\
Snowflake Arctic L & 1024 & 0.4 & 0.8 & 3.3 \\
Snowflake Arctic L & 2048 & 0.4 & 0.8 & 3.3 \\
Snowflake Arctic L & 3072 & 0.4 & 0.8 & 3.3 \\
Snowflake Arctic L & 4096 & 0.4 & 0.8 & 3.3 \\
GritLM 7B & 32 & 0.0 & 0.0 & 0.8 \\
GritLM 7B & 64 & 0.0 & 0.1 & 0.3 \\
GritLM 7B & 128 & 0.1 & 0.3 & 1.3 \\
GritLM 7B & 256 & 0.1 & 0.4 & 2.8 \\
GritLM 7B & 512 & 0.6 & 1.8 & 6.5 \\
GritLM 7B & 768 & 1.5 & 3.1 & 8.7 \\
GritLM 7B & 1024 & 1.8 & 3.5 & 10.6 \\
GritLM 7B & 2048 & 2.3 & 4.3 & 11.8 \\
GritLM 7B & 3072 & 2.0 & 4.3 & 12.9 \\
GritLM 7B & 4096 & 2.4 & 4.1 & 12.9 \\
Promptriever Llama3 8B & 32 & 0.0 & 0.0 & 0.1 \\
Promptriever Llama3 8B & 64 & 0.0 & 0.0 & 0.3 \\
Promptriever Llama3 8B & 128 & 0.0 & 0.1 & 0.6 \\
Promptriever Llama3 8B & 256 & 0.2 & 0.4 & 1.8 \\
Promptriever Llama3 8B & 512 & 0.6 & 1.4 & 5.4 \\
Promptriever Llama3 8B & 768 & 1.3 & 3.1 & 8.7 \\
Promptriever Llama3 8B & 1024 & 2.1 & 4.4 & 12.8 \\
Promptriever Llama3 8B & 2048 & 3.2 & 6.5 & 18.1 \\
Promptriever Llama3 8B & 3072 & 2.9 & 6.3 & 17.8 \\
Promptriever Llama3 8B & 4096 & 3.0 & 6.8 & 18.9 \\
Qwen3 Embed & 32 & 0.0 & 0.1 & 1.1 \\
Qwen3 Embed & 64 & 0.0 & 0.2 & 1.0 \\
Qwen3 Embed & 128 & 0.3 & 0.4 & 1.8 \\
Qwen3 Embed & 256 & 0.4 & 0.8 & 3.2 \\
Qwen3 Embed & 512 & 0.6 & 1.3 & 3.3 \\
Qwen3 Embed & 768 & 0.7 & 1.5 & 3.8 \\
Qwen3 Embed & 1024 & 0.7 & 1.6 & 4.6 \\
Qwen3 Embed & 2048 & 0.9 & 1.7 & 4.7 \\
Qwen3 Embed & 3072 & 0.8 & 1.6 & 4.8 \\
Qwen3 Embed & 4096 & 0.8 & 1.8 & 4.8 \\
Gemini Embed & 2 & 0.0 & 0.0 & 0.1 \\
Gemini Embed & 4 & 0.0 & 0.0 & 0.0 \\
Gemini Embed & 8 & 0.0 & 0.0 & 0.0 \\
Gemini Embed & 16 & 0.0 & 0.0 & 0.0 \\
Gemini Embed & 32 & 0.0 & 0.0 & 0.0 \\
Gemini Embed & 64 & 0.0 & 0.0 & 0.3 \\
Gemini Embed & 128 & 0.0 & 0.1 & 0.3 \\
Gemini Embed & 256 & 0.0 & 0.1 & 1.2 \\
Gemini Embed & 512 & 0.2 & 1.1 & 3.6 \\
Gemini Embed & 768 & 0.9 & 2.5 & 7.6 \\
Gemini Embed & 1024 & 1.3 & 2.7 & 8.1 \\
Gemini Embed & 2048 & 1.5 & 3.1 & 8.5 \\
Gemini Embed & 3072 & 1.6 & 3.5 & 10.0 \\
GTE-ModernColBERT & default & 23.1 & 34.6 & 54.8 \\
BM25 & default & 85.7 & 90.4 & 93.6 \\
\bottomrule
\end{tabular}
\caption{Results on \datasetname{}. See comparable Figure~\ref{fig:real}.}
\label{tab:real}
\end{table}

\begin{table}
\centering
\scriptsize
\begin{tabular}{cc}
    \toprule
    $d$ & Critical-$n$ \\
    \midrule
    4 & 10 \\
    5 & 14 \\
    6 & 19 \\
    7 & 24 \\
    8 & 28 \\
    9 & 32 \\
    10 & 36 \\
    11 & 42 \\
    12 & 47 \\
    13 & 54 \\
    14 & 62 \\
    15 & 70 \\
    16 & 79 \\
    17 & 89 \\
    18 & 99 \\
    19 & 109 \\
    20 & 120 \\
    21 & 132 \\
    22 & 144 \\
    23 & 157 \\
    24 & 170 \\
    25 & 184 \\
    26 & 198 \\
    27 & 213 \\
    28 & 229 \\
    29 & 245 \\
    30 & 261 \\
    31 & 278 \\
    32 & 296 \\
    33 & 314 \\
    34 & 333 \\
    35 & 352 \\
    36 & 372 \\
    37 & 392 \\
    38 & 413 \\
    39 & 434 \\
    40 & 460 \\
    41 & 484 \\
    42 & 505 \\
    43 & 545 \\
    44 & 605 \\
    45 & 626 \\
    \bottomrule
\end{tabular}
\caption{Critical Values of n for different d values in the Free Embedding optimization experiments. See Figure~\ref{fig:critical_n} for the corresponding figure.\label{tab:critical_values}}
\end{table}

\begin{table}
\centering
\begin{tabular}{lrr}
\toprule
Model & BEIR & LIMIT R@100 \\
\midrule
Snowflake Arctic & 55.22 & 3.3 \\
Promptriever & 56.40 & 18.9 \\
E5-Mistral & 57.07 & 8.3 \\
GritLM & 57.40 & 12.9 \\
Gemini Embed & 62.65 & 10.0 \\
Qwen3 Embed & 62.76 & 4.8 \\
\bottomrule
\end{tabular}
\caption{BEIR vs \datasetname{} results. See Figure~\ref{fig:mteb_vs_limit} for the comparable plot.}
\end{table}

\end{document}